\newtheorem{lemma}{Lemma}
\newtheorem{corollary}{Corollary}
\newenvironment{proof}{{\noindent \textbf{Proof:}}}{\hfill\rule{2mm}{2mm}\par}
\title{Engineering Small Space Dictionary Matching}
\author{Shoshana Marcus\thanks{
Simons Center for Quantitative Biology,
Cold Spring Harbor Laboratory,
1 Bungtown Road, Cold Spring Harbor, N.Y., 11724.
email: {\tt smarcus@cshl.edu}.
}
\and Dina Sokol \thanks{
Department of Computer and Information Science,
Brooklyn College of the City University of New York, 2900 Bedford Avenue,
Brooklyn, N.Y. 11210. email: {\tt sokol@sci.brooklyn.cuny.edu}.
%This work has been supported in part by the National Science Foundation Grant DB\&I 0542751 and the PSC-CUNY Research Award 62280-0040.
}
}
\begin{document}
\date{}
\maketitle
%\newpage

\begin{abstract}
The dictionary matching problem is to locate occurrences of any pattern among a set of patterns in a given text.
Massive data sets abound and at the same time, there are many settings in which working space is extremely limited.  
We introduce dictionary matching software for the space-constrained environment whose running time is close to linear.
We use the compressed suffix tree as the underlying data structure of our algorithm, thus,
the working space of our algorithm is proportional to the optimal compression of the dictionary.  
We also contribute a succinct tool for performing constant-time lowest marked ancestor queries 
on a tree that is succinctly encoded as a sequence of balanced parentheses, with linear time preprocessing of the tree.  
This tool should be useful in many other applications.   \\
Our source code is available at \urlstyle{same}\url{http://www.sci.brooklyn.cuny.edu/~sokol/dictmatch.html}

\end{abstract}

\section{Introduction}
In recent years, there has been a massive proliferation of digital data. %Some of the main contributors to this data explosion are the World-Wide Web, next generation sequencing, and increased use of satellite imaging. 
Concurrently, industry has been producing equipment
with ever-decreasing hardware availability. Thus, we are faced with scenarios
in which this data growth must be accessible to applications running on devices that have
reduced storage capacity, such as mobile and satellite devices. Hardware resources are
more limited, yet the user's expectations of software capability continue to escalate.
This unprecedented rate of digital data accumulation therefore presents a constant challenge
to the algorithms and software developers who must work with a shrinking hardware
capacity.

The dictionary matching problem is to identify a \textit{set} of patterns, called a dictionary, within a given text.
Applications for this problem include searching for specific phrases in a book, scanning a file
for virus signatures, and network intrusion detection. The problem also has applications
in the biological sciences, such as searching through a DNA sequence for a set of motifs, identifying motifs to characterize protein families, 
and finding anchors for fast alignment of large genomic sequences.

%todo: does this belong anywhere?
%Let $D=\left\{P_1, P_2, \ldots, P_d \right\}$ be a dictionary of patterns of total length $\ell$, $T=t_1t_2 \ldots t_n$ a text pattern occurrences in the text.

A series of dictionary matching algorithms that operate in small space have in fact been developed \cite{ChanHLS07, HLSTVit08, Bel10, HonKSTV10}. 
The latest of these results \cite{HonKSTV10} achieved time and space optimal 1D dictionary matching.  That is, their algorithm runs in linear time within space that meets empirical entropy bounds of the dictionary.  
The empirical entropy of a string ($H_0$ or $H_k$) describes the minimum number of bits that are needed to encode the string within context.  

Succinct  dictionary matching algorithms have remained in the theoretical realm and have not been implemented until now.  This work fills the void.  %and develop software for dictionary matching in small space.
%bridge this gap between theory and practice in succinct dictionary matching.
%However, there is a lag in the implementation of these theoretical contributions. This is likely due to their complexity and the novelty of the data structures which they rely upon.  To fill this void, we have developed software for succinct dictionary
We have developed software for dictionary matching in small space that relies on the compressed suffix tree, a popular succinct data structure. 
Our main challenge lay in combining a dictionary matching algorithm for the generalized suffix tree with compressed suffix tree representations.  
%todo: should we mention this result? 
%Fredriksson \cite{Fre09} achieved dictionary matching in small space and \emph{expected} linear time using compressed self-indexes and backward DAWG matching. The space used by his algorithm is close to the information theoretic bounds of the patterns.  Our algorithm has the advantages of being close to  linear time in the worst case and that it is online, it can process a text as it arrives.  

%The first algorithm that improves the space complexity of dictionary matching was presented by 
Chan et al.\ developed the first succinct dictionary matching algorithm \cite{ChanHLS07}.  Their algorithm uses the compressed suffix tree of Sadakane \cite{Sad07}, which they extended so that it can support a dynamically changing dictionary of patterns.
%They reduced the size of the dictionary index from  $O(\ell \log \ell)$ bits to $O(\ell)$ bits.  Their algorithm relies on a compressed representation of the suffix tree and assumes that the alphabet is of constant size.  It can find all pattern occurrences in the text in $O((n+occ)\log^2 \ell)$ time. 
Hon et al.\ presented a more space-efficient  dictionary matching algorithm that uses a sampling technique to compress a suffix tree \cite{HLSTVit08}.
%it is probably also a faster algorithm
Their algorithm uses several data structures
along with a compressed representation of the suffix tree, among them the string B-tree, a compressed trie of the patterns, %compressed trie has pointers for LMA
and an LCP array of the longest common prefixes between the patterns.
Our software uses only the compressed suffix tree, augmented with a succinct framework for lowest marked ancestor queries.
% Since the patterns are concatenated before the compressed index is constructed, $H_k(D) = H_k(P_1P_2 \ldots P_d)$.

We also contribute a succinct tool for performing lowest marked ancestor queries in constant time after linear time preprocessing of a compressed suffix tree. 
The compressed suffix tree is augmented by a bit vector and a sequence of balanced parentheses.  Lowest marked ancestor queries are answered by a set of constant-time queries to these data structures.
The lowest marked ancestor structure that we implemented  is appropriate for any succinct representation of an ordered tree that encodes the structure as a sequence of balanced parentheses, which was introduced by Jacobson \cite{Jac89}.  
Thus, our tool for lowest marked ancestor queries is a contribution that is useful to other applications that use the balanced parentheses representation of a tree.

%the succinct data structures that are readily available to us, and were designed with other purposes in mind. 
%In this paper we present intuition behind our approach and an overview of our code.

\begin{comment}
We created a time and space efficient program for dictionary matching on one-dimensional data.
We chose to use the suffix tree as the underlying data structure since there are compressed suffix tree representations that
reach empirical entropy bounds of the input string. %cite
 Furthermore, these  data structures have been implemented and their code is readily available.  These implementations have been proven to be very space efficient in practice. % cite
\end{comment}

We begin with  an overview of  compressed suffix trees in Section \ref{sec:CST}, since they are the basis of our succinct dictionary matching software.  %In Section \ref{sec:CST} we describe the compressed suffix tree, the main data structure upon which our software relies.  
In Section \ref{sec:softwareLin} we describe our linear-time dictionary matching software that relies on the uncompressed suffix tree.
Then, in Section \ref{sec:software}, we describe the techniques employed by our succinct dictionary matching software, and the succinct framework we implemented for lowest marked ancestor queries on a compressed suffix tree.  In Section \ref{sec:experiments}, we present experimental results to demonstrate that these techniques are in fact space-efficient.  We conclude with a summary and direction for future work in Section \ref{sec:conclusion}.

\section{Compressed Suffix Tree}\label{sec:CST}

We begin with a description of the suffix tree and of compressed representations of the suffix tree since our program relies on the compressed suffix tree as its underlying data structure.
The suffix tree is a compact trie that represents all suffixes of an input string. 
The suffix tree for $S=s_1s_2 \cdots s_n$ is a rooted, directed tree with $n$ leaves, one for each suffix.
Each internal node, except the root, has at least two children.	 Each edge is labeled with a nonempty substring of $S$ and no two edges emanating from a node begin with the same character.  The	path from the root to leaf $i$ spells out suffix $S[i \ldots n]$.
%Suffix links are an implementation trick necessary to achieve linear time and space complexity in suffix tree construction algorithms.  
Suffix links allow an algorithm to move quickly to a distant part of the tree.  A suffix link is a pointer from an internal node labeled $x\alpha$ to another internal node labeled $\alpha$, where $x$ is an arbitrary character and $\alpha$ is a possibly empty substring.  %The suffix links of Weiner's  algorithm are alphabet dependent as they work in reverse to represent the insertion of any character before a suffix.
The suffix array is a data structure that indexes a string by storing the lexicographical order of its suffixes.  

%As an example, the suffix tree for \textsf{Mississippi} is shown in Figure \ref{fig:suffixTreeMis}.
%The suffix tree for a string of size $n$ is represented in $O(n)$ words, or $O(n \log n)$ bits, by using indexes of constant size, rather than substrings of arbitrary length, to label the edges of the tree. 
%The straightforward approach to suffix tree construction inserts each suffix by a sequence of comparisons beginning at the root, in quadratic time with respect to the size of the input. Linear time algorithms include {Weiner, Mcreight, Ukkonen, Farach}.

\begin{comment}
	\begin{figure}[tb]
	\begin{center}
		\includegraphics[scale=.6]{images/suffixTreeMississippi.eps}
		\caption{Suffix tree for the string \textsf{Mississippi}.}
		\label{fig:suffixTreeMis}
	\end{center}
	\end{figure}
\end{comment}

%The suffix array indexes text by storing the lexicographic order of its suffixes.  
%When augmented with a longest common prefix (LCP) array to store the longest common prefix between adjacent suffixes, the suffix array meets much of the suffix tree's functionality.  %Yet, the suffix array occupies less space than the suffix tree does.   

Recent innovations in succinct full-text indexing provide us with the ability to compress both a suffix array and a suffix tree,
using space that is proportional to the optimal compression of the data they are built upon.
These self-indexes can replace the original text, as they support retrieval of the original text,
in addition to answering queries about the data very quickly. %first CSA: \cite{GroVit00}
Several compressed suffix array (CSA) representations exist, e.g., \cite{Sad03, GroGupVit03, FerMan05, FerManMakNav07}, each with a different time-space trade-off. The most recent results meet $k$th order empirical entropy of the input string.  %Each compressed suffix array offers a different time/space trade-off. A survey of compressed full-text indexes is presented in \cite{NavMak07}.  
Compressed representations of the suffix tree, e.g., \cite{Sad07,  RusNavOli11, Fis10, OhlFisGog10}, use the compressed suffix array as a component. Table \ref{table:CST} summarizes the time-space trade-offs in several compressed suffix tree (CST) representations.

%Following we review several alternative compressed representations of the suffix tree, each with its particular time/space trade-off. We reference both the theoretical algorithm, and the implementation where one is available.

%The suffix tree for a string $S$ of length $n$ occupies $O(n \log n)$ bits of space.  

\begin{table}
	\centering
		\begin{tabular}{|l|l|l|}
			\hline
			\textbf{Space (bits)} & \textbf{Slowdown} & \textbf{Reference}\\
			\hline
			$O(\ell \log \ell)$ & $O(1)$ & Uncompressed suffix tree \\
			\hline
			$O(\ell \log \sigma)$ & $O(polylog(\ell))$ & Sadakane \cite{Sad07} \\
			\hline
			$\ell H_k(T)+o(\ell \log \sigma)$ & $O(\log \ell)$ &  Russo et al.\ \cite{RusNavOli11} \\ %samples instead of separating functions
			\hline
			$(1+\frac{1}{\epsilon})\ell H_k(T)+o(\ell \log \sigma)$ & $O(\log^\epsilon \: \ell)$, $0<\epsilon \leq 1$ %sub-logarithmic
 				& Fischer et al.\ \cite{Fis10} \\
			\hline
			$|CSA|+|CLCP|+3\ell$ & $O(1)$ for many operations & Ohlebusch et al.\ \cite{OhlFisGog10} \\			%can say space is no more than $(1+\frac{1}{\epsilon})\ell H_k(T)+o(\ell)$ bits
			\hline
		\end{tabular}
		\caption{Compressed suffix tree representations for an input string $T$ of length $\ell$, where $|CSA|$ is the number of bits used to store the compressed suffix array and $|CLCP|$ is the number of bits occupied by the compressed LCP array.} 
		\label{table:CST}
\end{table}

Ohlebusch et al.\ \cite{OhlFisGog10} recognized that the compressed suffix tree generally consists of  three separate parts: the
lexicographical information in a compressed suffix array (CSA), the information about
common substrings in the longest common prefix array (LCP), and the tree topology
combined with a navigational structure (NAV). Each of these three components functions independently from the others and is stored separately.  %The fully-functional compressed suffix tree of Russo et al. \cite{RusNavOli11}  stores the sampled nodes in addition to these components.  
%The compressed suffix tree representations that conform to this structure have been implemented by Simon Gog and the code is available in the Succinct Data Structures Library. 
Representations of compressed suffix arrays and compressed LCP arrays are interchangeable in many compressed suffix tree representations. Combining the different representations of each component yields a rich variety of compressed suffix trees, although some compressed suffix trees favor certain compressed suffix array or compressed LCP array representations.  The Succinct Data Structures Library (SDSL) provides a range of compressed suffix tree implementations, which we used in our  dictionary matching software. We experimented with Sadakane's compressed suffix tree \cite{Sad07} by using an assortment of compressed suffix array and compressed LCP modules to achieve different time and space complexities in our dictionary matching software.

\section{Linear-Time Dictionary Matching with Suffix Tree}\label{sec:softwareLin}
We first developed a linear-time dictionary matching program that uses the uncompressed suffix tree as its primary data structure.  Then we modified our approach to use the compressed suffix tree to improve the space complexity.
In this section we describe the linear time dictionary matching software that uses an uncompressed suffix tree.
Then, in the next section, we delineate the revisions in our techniques so that we perform dictionary matching using compressed suffix tree representations.
 
A suffix tree can be used to index several strings, in a \emph{generalized suffix tree}.
The dictionary can be merged to form a single string by concatenating the  patterns with a unique delimiter separating them.  
%With that approach, a significant amount of space is wasted by indexing artificial suffixes that span several strings.  
Because it is \textit{online}, Ukkonen's suffix tree construction algorithm can insert one string at a time and index only the actual suffixes of a set of strings in a suffix tree \cite{Gusfield97}.
The dictionary of patterns  is indexed by a generalized suffix tree to preprocess it for dictionary matching queries.  
Then, the text is searched for pattern occurrences in linear time, in a manner similar to Ukkonen's insertion of a new string to the suffix tree.  %,  without modifying the index of the patterns.
We briefly summarize Ukknonen's suffix tree construction algorithm in the following paragraph, and depict its steps in Algorithm \ref{alg:UkkonenMain}.

	\begin{algorithm}
	\small
	\caption{Ukkonen's suffix tree construction algorithm}
	\label{alg:UkkonenMain}
	\begin{algorithmic}
		\STATE j = -1; \STATE \COMMENT{$j$ is last suffix inserted}
		\FOR{$i = 0$ to $n-1$} \STATE \COMMENT{phase $i$: $i$ is current end of string} %\STATE \COMMENT{phase $i$}
				\WHILE{$j<i$} \STATE \COMMENT{let $j$ catch up to $i$}
					\IF{singleExtensionAlgorithm(i, j)} %\STATE \COMMENT{insert suffix $j$}
							\STATE break \COMMENT{implicit suffix so proceed to next phase}
						\ENDIF
						%\STATE lastIndexInserted=j
					
						\IF{$lastNodeInserted \neq root$}
						 	\STATE $lastNodeInserted.SuffixLink \gets root$
						\ENDIF
					\STATE $lastNodeInserted \gets root$
				\ENDWHILE
		\ENDFOR
	\end{algorithmic}
	\end{algorithm} 
	\normalsize

 The elegance of Ukkonen's algorithm is evident in its key property.  The algorithm admits the arrival of the string during construction.  Yet, each suffix is inserted exactly once, and a leaf is never updated after its creation.
	%An extra variable is incremented as characters arrive, eliminating the need to update each of the leaves representing suffixes already indexed by the tree. The end index of each leaf is demarcated by this special variable.  Thus, a leaf is never updated after its creation.
As a new character is appended to the input string, Ukkonens's algorithm ensures that all suffixes of the string are indexed by the tree.  As soon as a suffix is implicitly found in the tree, modification of the tree ceases until the next new character is examined.  The next phase begins by extending the implicit suffix with the new character.  
%A suffix link is a pointer from the internal node at the end of the path labeled $x \alpha$ to the internal node at the end of the path labeled $\alpha$, where $x$ is an arbitrary character and $\alpha$ is a possibly empty substring.  
%A suffix tree with several suffix links is shown in Figure \ref{fig:slinks}.
Using suffix links and a pointer to the last suffix inserted, each suffix is inserted in the tree in amortized constant time.  The combination of one-time insertion of each suffix and rapid suffix insertion  results in an overall linear-time suffix tree construction algorithm.
 
Our dictionary matching algorithm over a generalized suffix tree of patterns was inspired by Ukkonen's process for inserting a new string into a generalized suffix tree (as shown in Algorithm \ref{alg:UkkonenMain}), pretending to index the text, without modifying the index.  
The text is processed in an online fashion, traversing the suffix tree of patterns as each successive character of text is processed.  A pattern occurrence is announced when a labeled leaf is encountered, i.e., a leaf that represents the first suffix of a pattern.
At a position of mismatch and at a pattern occurrence, suffix links are used to navigate to successively smaller suffixes of the matching string.  
When a suffix link is used within the label of a node, the corresponding number of characters can be skipped, obviating redundant character comparisons. In the spirit of Ukkonen's skip-count trick, this ensures that the text is scanned in linear time.
   
The skip-count trick is based on  Lemma \ref{lemma:numNodes}.  A suffix link is  a directed edge from the internal node at the end of the path labeled $x \alpha$ to another internal node at the end of the path labeled $\alpha$, where $x$ is an arbitrary character and $\alpha$ is a possibly empty substring.  We can similarly define suffix links for leaves in the tree.  The suffix link of the leaf representing suffix $i$ points to the leaf representing suffix $i+1$.

\begin{lemma}\label{lemma:numNodes}\cite{Gusfield97}
In a suffix tree, the number of nodes along the path labeled $\alpha$ is at least as many as the number of nodes along the path labeled $x\alpha$.
\end{lemma}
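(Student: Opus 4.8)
The plan is to exhibit an injection from the explicit nodes on the path spelling $x\alpha$ into the explicit nodes on the path spelling $\alpha$; such an injection immediately yields the desired count inequality. The natural candidate for this injection is the suffix-link map itself. Recall the standard characterization of explicit nodes: a nonempty string $\gamma$ is the label of an explicit internal node exactly when $\gamma$ occurs in $S$ followed by at least two distinct characters (i.e.\ $\gamma$ is right-branching), a leaf is labeled by a full suffix, and the root is labeled by the empty string. The nodes on the path spelling $x\alpha$ are precisely the loci of those prefixes $x\gamma$ of $x\alpha$ that are branching (plus the endpoint and the root), and analogously for $\alpha$.

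First I would establish the \emph{well-definedness} of the map, which is the heart of the argument: whenever $x\gamma$ labels an explicit node lying on the $x\alpha$ path, its image $\gamma$ must label an explicit node lying on the $\alpha$ path. For an internal node this follows from a right-extension argument. If $x\gamma$ is branching, then $x\gamma a$ and $x\gamma b$ both occur in $S$ for some distinct characters $a \neq b$; but then $\gamma a$ and $\gamma b$ also occur in $S$, so $\gamma$ itself is branching and its locus is a genuine explicit node. The leaf case is handled by the leaf suffix links defined above: the suffix link of the leaf representing suffix $i$ is the leaf representing suffix $i+1$, so if the endpoint of $x\alpha$ is a leaf it maps to a leaf on the $\alpha$ path. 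In every case the image lies on the path spelling $\alpha$, since deleting the leading $x$ from a prefix $x\gamma$ of $x\alpha$ produces the prefix $\gamma$ of $\alpha$.

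Second I would verify \emph{injectivity}, which is routine: the explicit nodes on the $x\alpha$ path correspond bijectively to the branching prefixes $x\gamma$ of $x\alpha$, and the map $x\gamma \mapsto \gamma$ simply deletes the first character, so it is injective. Hence distinct explicit nodes on the $x\alpha$ path map to distinct explicit nodes on the $\alpha$ path, and the number of nodes along the path labeled $\alpha$ is at least the number along the path labeled $x\alpha$. The one point requiring care is the treatment of the root, which lies on both paths and is fixed by the map. The hard part of the proof is the well-definedness step, since it is there that the defining property of suffix links—that deleting the leading character preserves the branching structure, so that the target of the map is always a real explicit node—must be justified; once that is in place, the counting conclusion is immediate.
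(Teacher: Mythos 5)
Your injection breaks down at exactly the point you call ``routine,'' and the failure is fatal. The map ``delete the first character'' sends the node whose label is the single character $x$ (which lies on the $x\alpha$ path whenever $x$ is right-branching) to the node labeled by the empty string, i.e.\ the root; but the root also lies on the $x\alpha$ path and is, as you say, fixed by the map. So two distinct nodes on the $x\alpha$ path share the same image, the map is not injective, and the counting conclusion does not follow. This is not a repairable technicality, because the statement as the paper phrases it is false. Take $S=\texttt{xaxb\$}$: since \texttt{x} is followed by both \texttt{a} and \texttt{b}, there is a branching node labeled \texttt{x}, so the path labeled \texttt{xa} contains two nodes (root and the \texttt{x}-node), while the path labeled \texttt{a} contains only the root. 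The same happens even if both paths end at explicit nodes: in $S=\texttt{xabcxabdxe\$}$ the path labeled \texttt{xab} has three nodes (root, \texttt{x}, \texttt{xab}), while the path labeled \texttt{ab} has only two (root, \texttt{ab}), because \texttt{a} is always followed by \texttt{b} and hence is not branching.

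What your argument actually establishes, once the root collision is accounted for, is the correct (Gusfield) form of the lemma: the path labeled $\alpha$ has at least as many nodes as the path labeled $x\alpha$ \emph{minus one} --- equivalently, node-depth decreases by at most one when a suffix link is traversed. Your well-definedness step is sound (right-branching of $x\gamma$ implies right-branching of $\gamma$; leaves map to leaves), and your map is injective when restricted to nodes whose labels have length at least two; only the depth-one node can collapse into the root. That weaker bound is still sufficient for the amortized analysis of the skip-count trick, which is all the paper needs. For comparison, the paper's own proof is a two-line contradiction argument claiming that fewer nodes on the $\alpha$ path would mean ``some suffix of $x\alpha$ is not indexed''; that inference is also invalid --- node counts along a path say nothing about which substrings are indexed, as the examples above show. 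So your approach is the right one in spirit and, with corrected bookkeeping, proves the true statement; but as written, the injectivity claim is wrong, and the lemma you were asked to prove is itself false.
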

\begin{proof}
%The proof is by contradiction.  
Suppose not.  That is, there is some $\alpha$ for which the path labeled $\alpha$  has fewer nodes than the path labeled $x\alpha$.  This means that some suffix of $x \alpha$ is not indexed by the suffix tree.   This implies that the suffix tree is not fully constructed.  Hence, a contradiction and the premise must be valid.
\end{proof}

\begin{corollary}
If the suffix link of the root points to itself, every node of the suffix tree has a suffix link.
\end{corollary}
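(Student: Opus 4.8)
The plan is to show that for every node the target of its suffix link is an explicit node of the tree, using the hypothesis on the root to cover the one boundary case that is not otherwise forced. First I would dispose of the root itself: its path label is the empty string, so it admits no decomposition as $x\alpha$, and the hypothesis stipulates directly that its suffix link points to itself. Every other node has a nonempty path label, which I write as $x\alpha$ with $x$ a single character and $\alpha$ a (possibly empty) substring; the suffix link of such a node is meant to point to the locus of $\alpha$, so it suffices to prove that this locus is always an explicit node.

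I would then split the non-root nodes into internal (branching) nodes and leaves. For an internal node $v$ labeled $x\alpha$, $v$ has at least two children, so there are two distinct characters $a \neq b$ with both $x\alpha a$ and $x\alpha b$ occurring in $S$. Dropping the leading $x$, both $\alpha a$ and $\alpha b$ occur in $S$ as well, so $\alpha$ is followed by at least two distinct characters and its locus is therefore a branching node (or the root, when $\alpha$ is empty) -- in either case an explicit node, which is the target we require. This is the structural content underlying Lemma \ref{lemma:numNodes}: that lemma guarantees that the path labeled $\alpha$ carries at least as many nodes as the path labeled $x\alpha$, and the character-dropping correspondence just described pins one of those nodes to the very end of $\alpha$.

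For a leaf representing suffix $i$, whose label is $S[i \ldots n]$, the suffix link should point to the leaf representing the suffix $S[i+1 \ldots n]$, which is present for every $i < n$. The remaining boundary case is the leaf of the shortest suffix $S[n \ldots n]$, a single character: here $\alpha$ is empty, so the intended target is the root. The same situation arises for any node whose label is a single character. In all of these cases the target, the root, certainly exists; the \emph{only} node whose suffix link is not already forced by the structure of the tree is the root itself, and this is precisely what the hypothesis supplies, thereby closing the argument for every node.

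The step I would treat most carefully, and the main obstacle, is the internal-node argument: establishing that the locus of $\alpha$ is a genuine explicit branching node rather than an implicit point in the interior of an edge. The two-distinct-right-extensions argument settles this, but it must be phrased so that it meshes with the counting formulation of Lemma \ref{lemma:numNodes}, which speaks of node counts along whole paths rather than the existence of a node at a prescribed depth. Once that correspondence is made explicit, the leaf case and the root hypothesis assemble immediately into the conclusion that every node of the suffix tree has a suffix link.
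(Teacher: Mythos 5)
Your proof is correct, but it follows a genuinely different route from the paper's. The paper offers no explicit argument for this corollary at all: it is stated immediately after Lemma \ref{lemma:numNodes} and is implicitly treated as a direct consequence of that counting statement (the path labeled $\alpha$ has at least as many nodes as the path labeled $x\alpha$), with the root hypothesis closing the one remaining case. You correctly identify the weak point of that implicit derivation --- a count of nodes along a path does not, by itself, place an explicit node exactly at the end of $\alpha$ --- and you fill the gap with the classical structural argument: a branching node labeled $x\alpha$ has two distinct right extensions $x\alpha a$ and $x\alpha b$, hence $\alpha a$ and $\alpha b$ both occur in $S$, hence the locus of $\alpha$ is itself a branching node (or the root), i.e.\ explicit; leaves are handled by the suffix-$i$ to suffix-$(i{+}1)$ correspondence, and the root by hypothesis. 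What your approach buys is rigor and self-containment: the existence of suffix link targets is established directly from the definition of the tree, in the style of Gusfield's treatment, rather than inferred from a lemma whose counting formulation does not quite entail it. What the paper's route buys is brevity, and the fact that Lemma \ref{lemma:numNodes} is in any case needed later for the skip-count complexity argument, so deriving the corollary from it costs the paper nothing extra. One small point of care: your case analysis for leaves quietly uses the paper's convention that every suffix has its own leaf (true here because the patterns are concatenated with unique delimiters, so no suffix is a prefix of another); without that convention the leaf for suffix $i{+}1$ could be implicit, and your internal-node argument would not cover it.
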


Ukkonen uses suffix links to navigate across a suffix tree and then skip over the appropriate number of characters labeling the beginning of an edge.  In dictionary matching, we navigate a fully constructed suffix tree, and every node must have a suffix link established for it.
  To avoid redundant comparisons, we follow a suffix link across a suffix tree and then jump up to the position at which the mismatch occurred.  It is more efficient to navigate up a suffix tree than down.  That is, every node has a single parent but when navigating to a child, several branches can be considered. %However, Ukkonen's algorithm uses the fact that the parent node must already have a suffix link.  
  When a suffix link is traversed, we know the number of characters to skip going up the edge as this is the number of characters that remain along the edge  after the position of mismatch.  Yet, we do not know at which node traversal will halt.  The shorter label may be split over more edges than the longer label spans, by Lemma \ref{lemma:numNodes}.  

We extended Algorithm \ref{alg:UkkonenMain} to perform dictionary matching.  Pseudocode of our program is delineated in Algorithm \ref{alg:DictST}, with its submodules extracted to Algorithms  \ref{alg:Mismatch} and \ref{alg:SkipCount}.
Our program reports the longest pattern occurrence that ends at each text position.
When a pattern is a suffix of a longer pattern, and the longer pattern occurs in the text, we do not spend time reporting an occurrence of the shorter pattern.

	\begin{algorithm}
	\small
	%assumption: one pattern cannot be substring of another
	\caption{Dictionary matching over the generalized suffix tree }
	\label{alg:DictST}
	\begin{algorithmic}[1]
		\STATE \textit{curNode} $\gets$ \textit{root}
		\STATE \textit{textIndex} $\gets 0$ %\COMMENT{text index} 
		\STATE \textit{curNodeIndex} $\gets 0$ % \COMMENT{index within label of current node}
		\STATE \textit{skipcount} $\gets 0$
		\STATE \textit{usedSkipcount} $\gets false$
		\REPEAT
      \STATE \textit{lastNode} $\gets$ \textit{curNode}
			\IF {\textit{usedSkipCount} $\neq$ \textit{true}}		
				\STATE \textit{textIndex}$+=$\textit{curNodeIndex} %//if not first time traversing
				\STATE \textit{curNodeIndex} $\gets 0$

				\STATE \textit{curNode} $\gets$ \textit{curNode.child(text[textIndex])}

        \IF {\textit{curNode.length}$>0$}
        	\STATE \textit{curNodeIndex}$++$ \COMMENT{already compared the first character on the edge}
        \ENDIF
			\ELSE
    		\STATE \textit{usedSkipCount} $\gets$ \textit{false}
			\ENDIF

				\STATE \COMMENT {compare text}
				\WHILE{\textit{curNodeIndex}$<$\textit{curNode.length} AND \textit{curNodeIndex}$+$\textit{textIndex}$<$\textit{text.length}}
					%\IF {\textit{text[textIndex+curNodeIndex]}$\neq$ \textit{pat[curNode.stringNum][curNode.beg+curNodeIndex]}}
					\IF {$text[textIndex+curNodeIndex]\neq pat[curNode.stringNum][curNode.beg+curNodeIndex]$}					
						\STATE break \COMMENT{mismatch}
					\ENDIF
					\STATE \textit{curNodeIndex}$++$
				\ENDWHILE

					\IF{\textit{curNodeIndex=curNode.length} AND \textit{curNode.firstLeaf()}}\label{line:announce}
						\STATE {announce pattern occurrence} 
					\ENDIF

				%\IF{\textit{curNodeIndex = curNode.length} AND \textit{curNode.length}$>0$ AND \textit{text[textIndex+curNodeIndex-1]=pat[curNode.stringNum][curNode.beg+curNodeIndex-1]}}
				\IF{$curNodeIndex = curNode.length$ AND $curNode.length>0$ AND $text[textIndex+curNodeIndex-1]=pat[curNode.stringNum][curNode.beg+curNodeIndex-1]$}
        			\STATE continue  \COMMENT{branch and continue comparing text to patterns}
				\ENDIF
				\STATE {handleMismatch}
		\UNTIL{\textit{textIndex+curNodeIndex} $\geq$ \textit{text.length}} \COMMENT{scan entire text}				

	\end{algorithmic}
	\end{algorithm} 
	\normalsize

	\begin{algorithm}
	\small
	\caption{Handling a Mismatch}
	\label{alg:Mismatch}
	\begin{algorithmic}

					\IF{\textit{curNode.depth} $\neq 0$ OR \textit{lastNode.depth} $\neq 0$} %use suffix link
						\IF{\textit{curNode.suffixLink = root} AND  \textit{lastNode.suffixLink}$\neq$ \textit{root}}
							    \STATE \textit{curNode} $\gets$ \textit{lastNode}
							    \STATE \textit{curNodeIndex} $\gets$ \textit{curNode.length}  \COMMENT{mismatched when trying to branch}
							    \STATE \textit{textIndex} $-=$ \textit{curNode.length}
						\ENDIF

%todo: removed since too much detail
		\IF{ \textit{curNode.parent} $=$ \textit{root} AND \textit{curNodeIndex} $= 1$}   
	   		\STATE \textit{textIndex}$++$
	       \STATE \textit{curNodeIndex} $= 0$
	       \STATE \textit{curNode = curNode.parent} %//not suffix link but root
	       \STATE continue \COMMENT{when traverse suffix link: will be at mismatch, so skip 1 char}
		\ENDIF

		\STATE useSkipcountTrick(skipcount, curNode) %\COMMENT{use skipcount trick after crossing suffix link} 

					\ELSE \STATE \COMMENT{mismatch at root}
							\STATE \textit{textIndex++}
           \ENDIF

	\end{algorithmic}
	\end{algorithm} 
	\normalsize

	\begin{algorithm}
	\small
	\caption{Skip-Count trick}
	\label{alg:SkipCount}
	\begin{algorithmic}

\REPEAT 
%\COMMENT{use skip-count trick}
\STATE \textit{curNode} $\gets$ \textit{curNode.suffixLink}
\STATE \textit{usedSkipCount} $\gets$ \textit{true}
\STATE \textit{textPos} $=$ \textit{curNodeIndex+textIndex}
\STATE \textit{skipcount} $\gets$ \textit{curNode.length} $-$ \textit{curNodeIndex}

\IF {\textit{skipcount} $\geq$ \textit{curNode.length}}

	\IF{\textit{curNode.length} $= 0$}
		\STATE{\textit{usedSkipCount} $\gets$ \textit{false}} \COMMENT{branch at next iteration of outer loop, look for next text char}
		\STATE{\textit{curNodeIndex} $\gets 0$}
		\STATE{\textit{skipcount} $\gets 0$}
	\ELSE
		\IF{\textit{skipcount} $=$ \textit{curNode.length}}
    	\STATE{\textit{curNodeIndex}$--$}
      \STATE{\textit{usedSkipCount} $\gets$ \textit{false}} \COMMENT{branch at next iteration of outer loop}                   
     \ENDIF
     \STATE \textit{skipcount} $-=$ \textit{curNode.length}
     \STATE \textit{curNode} $\gets$ \textit{curNode.parent}
	\ENDIF
	\ELSE
   	\STATE \textit{curNodeIndex} $\gets$ \textit{curNode.length} $-$ \textit{skipcount}
		\STATE \textit{skipcount} $\gets 0$	
\ENDIF

		\UNTIL{\textit{skipcount} $\leq 0$} 
\STATE \textit{textIndex} $=$ \textit{textPos} $-$ \textit{curNodeIndex}

	\end{algorithmic}
	\end{algorithm}

A key challenge in implementing dictionary matching on the suffix tree is the scenario in which one  pattern is a proper substring of another pattern \cite{AmiFar91}. Traversing the suffix tree using suffix links (as in Algorithm \ref{alg:DictST}), these pattern occurrences can be passed unnoticed in the text. This limitation is addressed by augmenting each node of the suffix tree with a pointer to the longest prefix of the label along its path from the root that is a complete pattern.  
The nodes are marked with this information in linear time by a depth-first traversal of the suffix tree.  

The suffix tree is a versatile tool in string algorithms, and is already needed in many applications to facilitate
other queries.  Thus, in practice, our linear-time dictionary matching program with the uncompressed suffix tree requires very little additional space.  
This tool is itself a contribution, allowing efficient dictionary matching in small space, however, we improved this application by using a compressed suffix tree as the underlying data structure.

\section{Dictionary Matching with Compressed Suffix Tree}\label{sec:software}

%Once the linear time and space dictionary matching software was complete, 
In this section we describe how we redesigned our dictionary matching code to run over a compressed suffix tree in linear time, overlooking the slowdown of queries on the compressed suffix tree.
Since the existing compressed suffix suffix tree construction algorithms are not online algorithms, it is not possible to build the compressed suffix tree incrementally, inserting one pattern at a time.  Instead, the dictionary is merged into a single string by concatenating the patterns with a unique delimiter between them.
We used the Succinct Data Structures Library (SDSL)\footnote{
\url{http://simongog.github.com/sdsl/}
%\url{http://www.uni-ulm.de/en/in/institute-of-theoretical-computer-science/research/sdsl.html}
} since it provides a C++ implementation of a variety of compressed suffix tree representations and it was proven to be more efficient than previous compressed suffix tree implementations \cite{GogThesis}.

	\begin{algorithm}
	\caption{Announcing Pattern Occurrence in CST}
	\label{alg:CSTocc}
	\begin{algorithmic}
		%\STATE \COMMENT{Precondition: at an end of string marker}
		\IF{\textit{getCharAtNodePos(curNode, curNodeIndex)} = END\_OF\_STRING\_MARKER}

	    \STATE {\textit{pos} $\gets csa[lb(curNode)]-1$} 
	    \STATE \COMMENT{$lb(v)$ returns the left bound of node $v$ in the suffix array}
  	  \STATE \COMMENT{\textit{pos} is dictionary index immediately preceding this leaf's ancestor emanating from \textit{root}}
    	\IF {\textit{pos}$<0$}
    			\STATE{\textit{occ} $\gets$ \textit{true}} \COMMENT{beginning of first pattern}
    	\ELSE
    			\STATE {$c \gets$ \textit{getCharAtPatternPos(pos)}}
	        \IF {$c = $ END\_OF\_STRING\_MARKER}
            \STATE{\textit{occ} $\gets$ \textit{true}} \COMMENT{beginning of some pattern after first}
					\ENDIF
    	\ENDIF
		\ENDIF
	\end{algorithmic}
	\end{algorithm} 
	\normalsize

Although the ultimate capability of the compressed suffix tree  is modeled after the functionality of its uncompressed counterpart, many operations that are straightforward in the uncompressed suffix tree require creativity in the compressed data structures.  Understanding how the suffix tree components are represented in the compressed variation is a necessary prerequisite to implementing seemingly straightforward navigational tasks.  Furthermore, the compressed suffix tree is a self-index and allows us to discard the original set of patterns.  Thus, we had to figure out which component data structure to query in order to randomly access a single pattern character.
For instance, announcing a pattern occurrence (Algorithm \ref{alg:DictST}, line \ref{line:announce}) is not simply a question of checking whether traversal has reached the end of a leaf representing the first suffix of a pattern.
A simple \emph{if} statement is replaced by the segment of pseudocode delineated in Algorithm \ref{alg:CSTocc} and described in the following paragraph.

Instead of an \emph{if} statement that checks properties of a leaf, we perform the following computation, involving several function calls, to determine if a pattern occurrence has been located in the text.
When traversing the compressed suffix tree according to the text, a mismatch along an edge leading into a leaf may in fact be a pattern occurrence.  Thus, we first check if the mismatch is a string delimiter, which mismatches every text character.
Then, we determine if this leaf represents the first suffix of some pattern. 
This is done by finding out which character precedes the beginning of this leaf's path from the root.
If the path begins at the beginning of the dictionary, this leaf represents the first suffix of the first pattern, and a pattern occurrence is announced.
Similarly, if the character at that position is a pattern delimiter, the suffix is a complete pattern, and a pattern occurrence is announced.

The skip-count trick we described in the previous section enables us to navigate the compressed suffix tree while processing the text in linear time.
When we use this technique and traverse suffix links to find pattern occurrences in the text, some pattern occurrences can pass unnoticed.  This concern is limited to a dictionary in which one pattern is a proper substring of another.  Consider the suffix tree in Figure \ref{fig:CST_LMA} for the dictionary of patterns \texttt{D=\{a, ate, bath, later\}}.  Two of the patterns in the dictionary are substrings of other patterns.
If the text contains the word \texttt{lately}, an occurrence of the pattern \texttt{ate} should be identified within this word.  
However, using suffix links, we navigate from the node labeled \texttt{later} to the node labeled \texttt{ater} to the node labeled \texttt{ter}, without recognizing an occurrence of \texttt{ate}.  This is because we are looking for the longer pattern \texttt{later}.

%Need to explain why need LMA for dictionary matching with suffix links.
%If text includes word batter (a) or lately (ate)
%Suppose the dictionary consists of the strings, `a', `ate', `bath', and `later'. Refer to figure for suffix tree.
%If the text has `lately', an occurrence of `ate' should be found within it.
%However, using suffix links, we navigate from the node labeled later to ater to ter, without realizing that an occurrence of ate has been skipped over.  This is because we are looking for the longer pattern later.

	\begin{figure}[tb]
	\begin{center}
		\includegraphics[scale=.3]{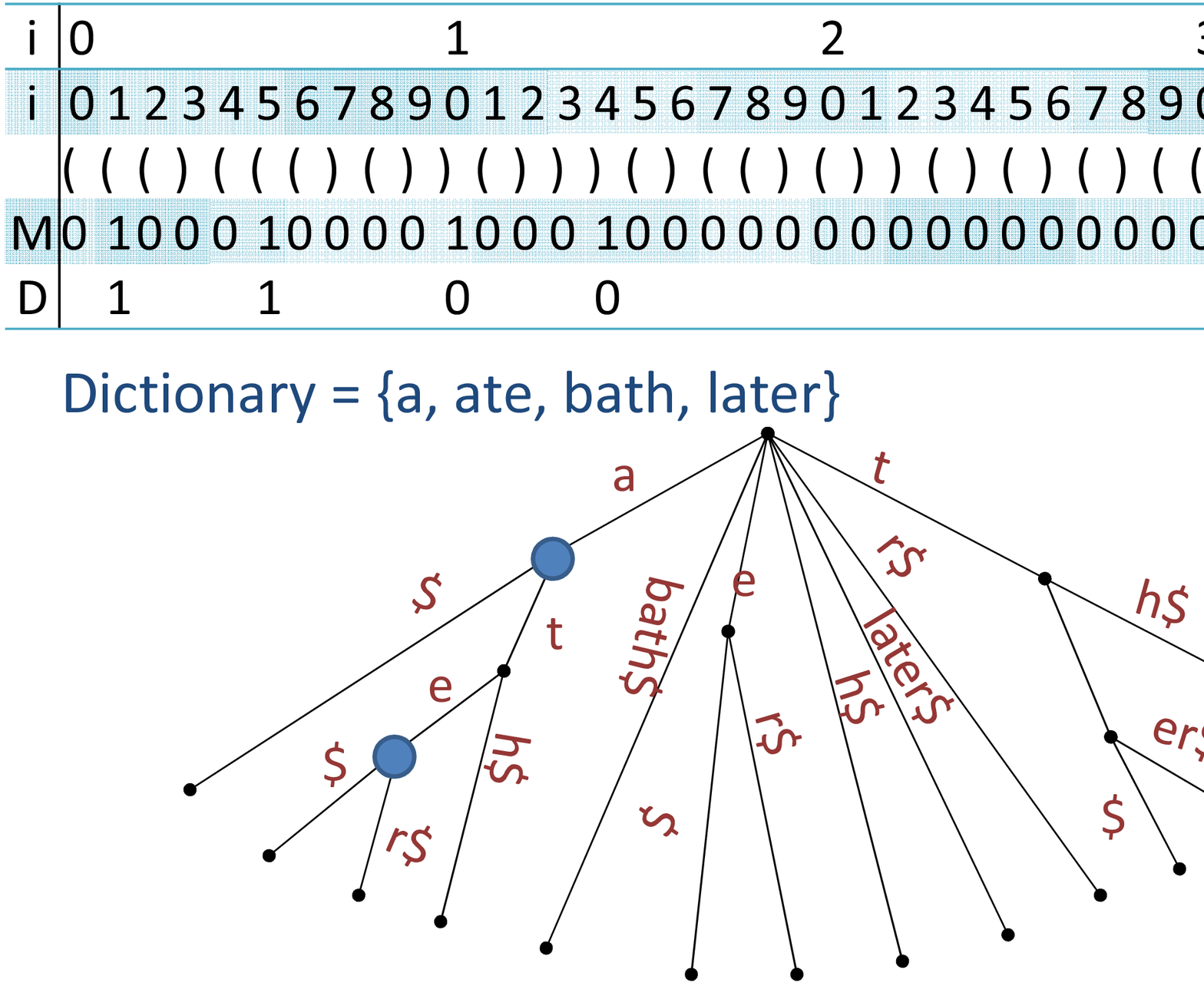}
		\caption{Suffix tree for a dictionary in which two patterns are proper substrings of other patterns.  Two nodes are marked.  A depth-first search is performed on the nodes to set up arrays M and D, depicted above the suffix tree along with the balanced parentheses representation of the tree structure.}
		\label{fig:CST_LMA}
	\end{center}
	\end{figure}

In the uncompressed suffix tree, we mark nodes that are pattern occurrences and preprocess the suffix tree with a depth-first traversal so that lowest marked ancestor (LMA) queries can be answered in constant time.  Then, an LMA query at each traversal of a suffix link ensures that no pattern occurrence is skipped over by the skip-count trick.  In a compressed suffix tree, this is not as straightforward since the nodes are not stored as independent entities.  Thus, we implemented a framework for answering lowest marked ancestor queries in constant time that consists of bit arrays and sequences of balanced parentheses.  We coded this framework with the compressed suffix tree in mind. Yet, it is suitable for any compressed representation of an ordered tree that represents the nodes as a sequence of balanced parentheses.  This is a more general contribution of this project.

	\begin{algorithm}
	\small
	\caption{Lowest Marked Ancestor Query on node in CST}%as operations on bit arrays and sequences of balanced parentheses}
	\label{alg:LMAonCST}
	\begin{algorithmic}
	
	 \STATE \COMMENT{returns root if node has no marked ancestor}
	 \STATE \COMMENT{rank and select queries assume that the bit-array is 0-based}
	 %\STATE{}
	 %\STATE {bit-array M: denotes which nodes in suffix tree are marked, 2 bits per node in DFS order}
	 %\STATE {sequence of balanced parentheses D: denotes relationship between marked nodes}
	 \STATE{}
	 
		\IF{M[node]=1} 
				\STATE return node \COMMENT{node is marked}
		\ELSE
				\STATE $pre\_y \gets$ M.rank(node+1)
				\IF{$pre\_y=0$}				          
           \STATE return root
        \ELSE
            \STATE $y \gets$ M.select$(pre\_y)-1$ %0-based array
             \IF{B[$y$]=1} % // B[y]='(' so y is the marked ancestor
                   \STATE return $y$   \COMMENT{ $y$ is the LMA since B[$y$]=`('}                             
             \ELSE % //B[y]=')'
                    \STATE $y1 \gets$ M.rank$(y)$ \COMMENT{coresponding index in D}
                    \STATE $y2 \gets$ D.find\_open$(y1)$
                    \STATE $y3 \gets$ D.enclose$(y2)$ %//what is returned if there is no enclosing parentheses? returns size() %make it NULL in pseudocode
                  \IF{$y3$ = NULL} 
                      \STATE  return root \COMMENT{no enclosing parentheses }                   
                 \ELSE
                       \STATE $y4 \gets$ M.select$(y3+1)-1$ \COMMENT{map from D to M}
                       \STATE {return $y4$}
                \ENDIF
            \ENDIF
				\ENDIF
		\ENDIF
	\end{algorithmic}
	\end{algorithm} 
	\normalsize

We built a succinct framework that answers LMA queries in constant time by augmenting the compressed suffix tree with a bit-array, M, and a sequence of balanced parentheses, D. M and D are populated by a depth-first traversal once the compressed suffix tree is fully constructed.
 The bit array M stores two bits per suffix tree node.  The suffix tree is traversed in depth-first order and a 1 is stored in each bit that represents a marked node.  The sequence of balanced parentheses D denotes the relationship between the marked nodes in the suffix tree, also in depth-first search order.  D is stored as a bit-array, with two bits per marked node, in which 1 stands for `(' and 0 stands for `)'.  We use bit array B to refer to the balanced parentheses representation of the nodes in the compressed suffix tree.  %M of size $2\ell$ and a bit vector D of size $2m$, where  $\ell$ is the number of nodes in the suffix tree and $m$ is the number of marked nodes in the suffix tree. 

The first step in performing an  LMA query on a node $x$ in the compressed suffix tree is to find out if $x$ is marked in M.  If the node is marked, it is its own LMA.  If it is not a marked node, we locate the closest marked bit to the left of the node in M, which we call $y$.  If $y$ represents the first visit to a node, an open parenthesis in B, $y$ represents the lowest marked ancestor of $x$.  Otherwise, the lowest marked ancestor of $x$ corresponds to the closest marked bit enclosing $y$.  To find the lowest marked ancestor in this case, we map $y$ from M to D and find the first open parenthesis that precedes its open parenthesis in D.  This procedure is delineated in Algorithm \ref{alg:LMAonCST}.

We refer to the suffix tree in Figure \ref{fig:CST_LMA} for illustrative examples.  
%If a node is marked in M, the node itself is its LMA.  
The LMA of the node labeled \texttt{a} is itself since its bit, M[1], is marked in M.  %If the node is not marked, we begin by locating the maximal  node $y$ that is marked in M and $y<x$. If position $y$ corresponds to an open parenthesis in B, then $y$ is the LMA of $x$.  For example, 
The LMA of the node labeled \texttt{ater}, represented by M[8], is the node labeled \texttt{ate}, which corresponds to $y=5$, since B[5] is an open parenthesis.  %This is found by rank and select queries.
%Otherwise,
The LMA of the node labeled \texttt{ath}, represented by M[11], is the node labeled \texttt{a}, which corresponds to M[1], since $y=10$, B[10] is a close parenthesis in, and position 1 is the open parenthesis of the node that encloses M[10] in D. %This is found by find\_open, enclose, and select queries.
%Need balanced parentheses support on D, rank and select support on M.

Algorithm \ref{alg:LMAonCST} performs constant-time lowest marked ancestor (LMA) queries on the compressed suffix tree and consists of a set of  operations on bit arrays and sequences of balanced parentheses. We use data structures that answer rank \cite{Vigna08} and select \cite{ClaMun96} queries on bit arrays in constant time and find\_open, find\_close, and enclose queries on sequences of balanced parentheses in constant time \cite{SadNav10}.
A rank query, rank($i$), returns the number of 1's in the first $i$  positions of the array.
A select query, select($i$), finds the position of the $i$th 1 in the bit array.
find\_open($i$) and find\_close($i$)  queries find the matching parenthesis for the  parenthesis at position $i$. enclose($i$) finds the closest enclosing pair of parentheses to the parenthesis at position $i$.
\begin{comment}
We use Vigna's \cite{Vigna08} approach to answering rank queries on a bit array in constant time and Clark and Munro's approach to performing select queries \cite{ClaMun96} on a bit array in constant time.  
We use %Geary et al.'s \cite{GearyRRR06} 
Sadakane's 
succinct representation of a  string consisting of pairs of balanced parentheses that supports the following operations in constant time, finding the matching parenthesis for a given parenthesis \textit{Find\_Open(i)} and \textit{Find\_Close(i)}, and finding the pair of parentheses that most tightly encloses a given pair, \textit{Enclose(i)} \cite{SadNav10}.
\end{comment}
%Sadakane and Navarro \cite{SadNav10} presented another data structure, which also solves these operations in constant time and succinct space.
We used efficient implementations of these data structures that are included in the Succinct Data Structures Library. %bp_support_j and bp_support_sada  

%From \cite{ChanHLS07}: To represent a suffix tree, we need a compact representation of the tree structure. This can be done using a sequence of balanced parentheses [Jacobson 1989, Munro and Raman 2001]. As for theoretical interest, we observe that the classical problem for maintaining a subset of items in [1, n] under updates, with rank and select queries supported, can be reduced to the parentheses maintenance problem. 
%The SDSL uses recent results to represent a sequence of balanced parentheses in very little space and answer queries quickly, and to compress the suffix array and LCP array.
%For any given integer $i$, the \textit{rank} query returns the number of items in the subset which is at most $i$; for any given integer $j$, the \textit{select} query finds the $j$-th smallest item in the subset.
%This marked tree structure can be represented by a list of balanced parentheses [Amir et al. 1995, Improved Dynamic Dictionary Matching]. compare what he does to what we do.
%They provide a new $2n + o(n)$-bit representation that supports all the above operations in $O(1)$ time.
%input consisting of $n$ pairs of balanced parentheses
	
\section{Experimental Results}\label{sec:experiments}
%We implemented the algorithms in C++ and ran tests on ...
We implemented the algorithms in C++ and ran experiments on computers that feature an Intel(R) Xeon(R) processor at 2.93 GHz, with 5 GB of RAM, running Linux kernel  version 2.6.32.
For one set of experiments,
we searched a 5 MB English text for common English words in a 2.5 MB dictionary that comes from  ClueWeb09\footnote{\url{http://lemurproject.org/clueweb09/}} and  was used in \cite{Boy11}.
We performed another set of experiments on biological data.  
We searched 5 MB of the human genome for patterns in a 3 MB dictionary of  promoter sequences in the human genome\footnote{\url{http://epd.vital-it.ch}}.  
%promoter sequences are regulatory DNA sequences that initiate the expression of a gene.
The texts are 5 MB of DNA and 5 MB of English text from the Pizza\&Chili corpus\footnote{\url{http://pizzachili.dcc.uchile.cl}}. 
%boston.lti.cs.cmu.edu/Data/clueweb09/ %English words come from ClueWeb09.

%these are the specs of 146.245.252.30
%cat /proc/meminfo
%5GB of RAM

%cat /proc/cpuinfo

%uname -a %for version of linux
%Linux sol29 2.6.32-279.19.1.el6.x86_64 #1 SMP Tue Dec 18 17:22:54 CST 2012 x86_64 x86_64 x86_64 GNU/Linux

We used the framework of Sadakane's compressed suffix tree \cite{Sad07}, cst\_sada, in our experiments since it stores nodes as a sequence of balanced parentheses and we were able to augment it for constant-time lowest marked ancestor queries.
Configurations of cst\_sada with newer representations of its components beat
the runtime of configurations of the other types of compressed suffix trees  on almost all operations and its space savings is of comparable significance \cite{GogThesis}.  In particular, the navigational operations are very fast.
Sadakane's CST consists of a compressed suffix array, a compressed LCP array, and a navigational structure.
We ran experiments on four different variations of Sadakane's CST using two different types of compressed suffix arrays, csa\_sada and csa\_wt, and two different types of compressed LCP arrays, lcp\_dac and lcp\_support\_tree2.  We also ran our experiments on an uncompressed suffix tree.
The  csa\_sada class is a very clean reimplementation of Sadakane's compressed suffix array  \cite{Sad02} and the csa\_wt class is based on a wavelet tree.
The lcp\_dac class uses the direct accessible code solution of Brisaboa et al.\ \cite{BriLadNav13}, which represents the LCP array in suffix array order,
and lcp\_support\_tree2 uses a tree compressed representation of the LCP array, which is based on the topology of the compressed suffix tree.

We compare the time-space trade-off of dictionary matching  using different variations of 1D dictionary matching software.
For a baseline, we use uncompressed components, which consume the most space but perform operations in constant time.  
The remaining runs use different underlying representations of compressed suffix arrays and compressed LCP arrays as components.  

Compressed suffix trees conserve a considerable amount of space while  the sacrifice is a negligible slowdown in running time.  This is illustrated in Figure \ref{fig:expGraphs} and in Tables \ref{table:expResultsDNA} and \ref{table:expResultsEng}. 

%todo: we would like to compare to other time-space 1D dictionary matching software

%todo: include this sentence?
%For a dictionary in which a pattern cannot be a substring of another, we can easily change our program to use a different CST since the SDSL provides that flexibility.

	\begin{figure}[tb]
	\begin{center}
		\includegraphics[scale=.4]{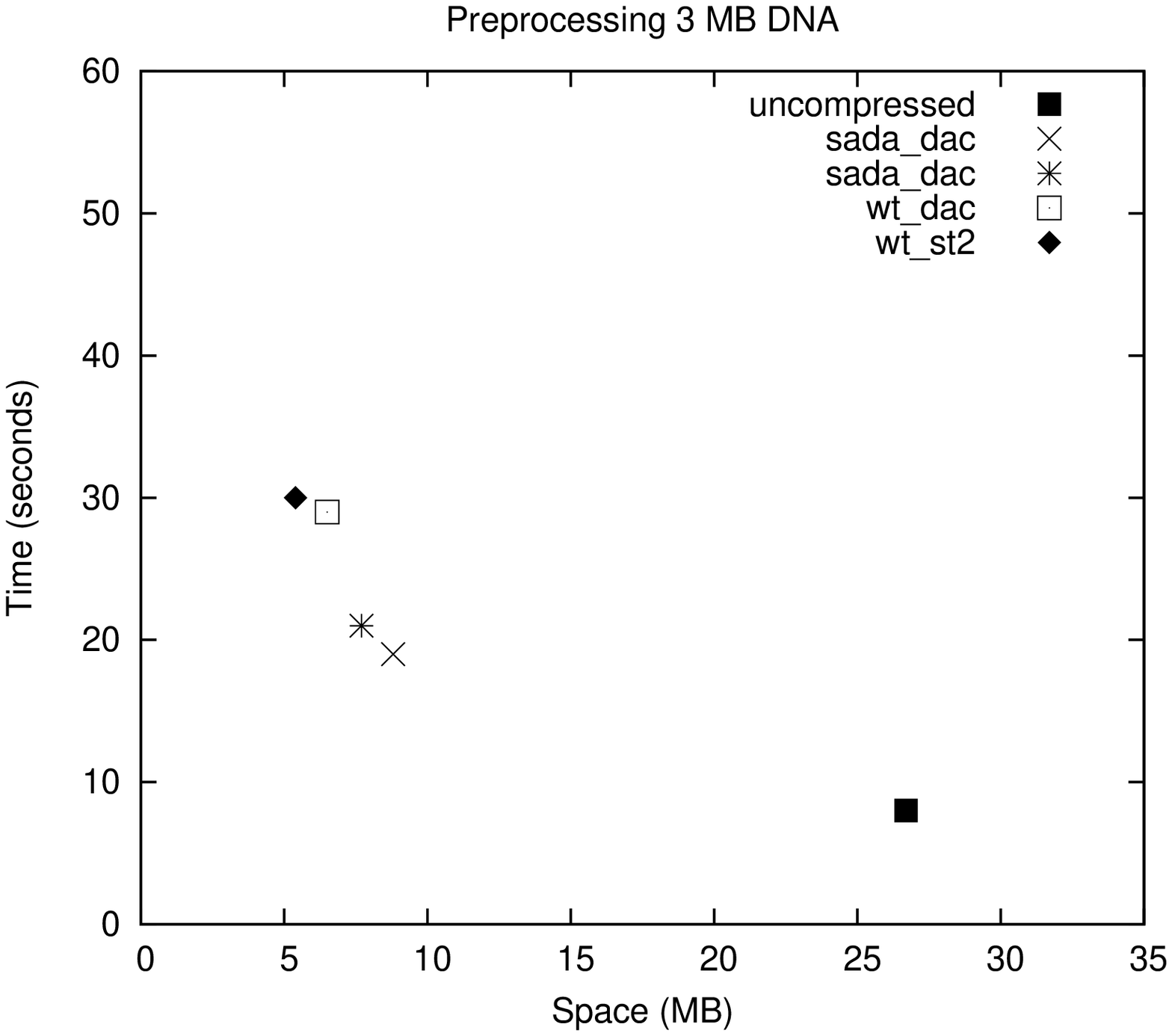}
		\hspace{10pt}
		\includegraphics[scale=.4]{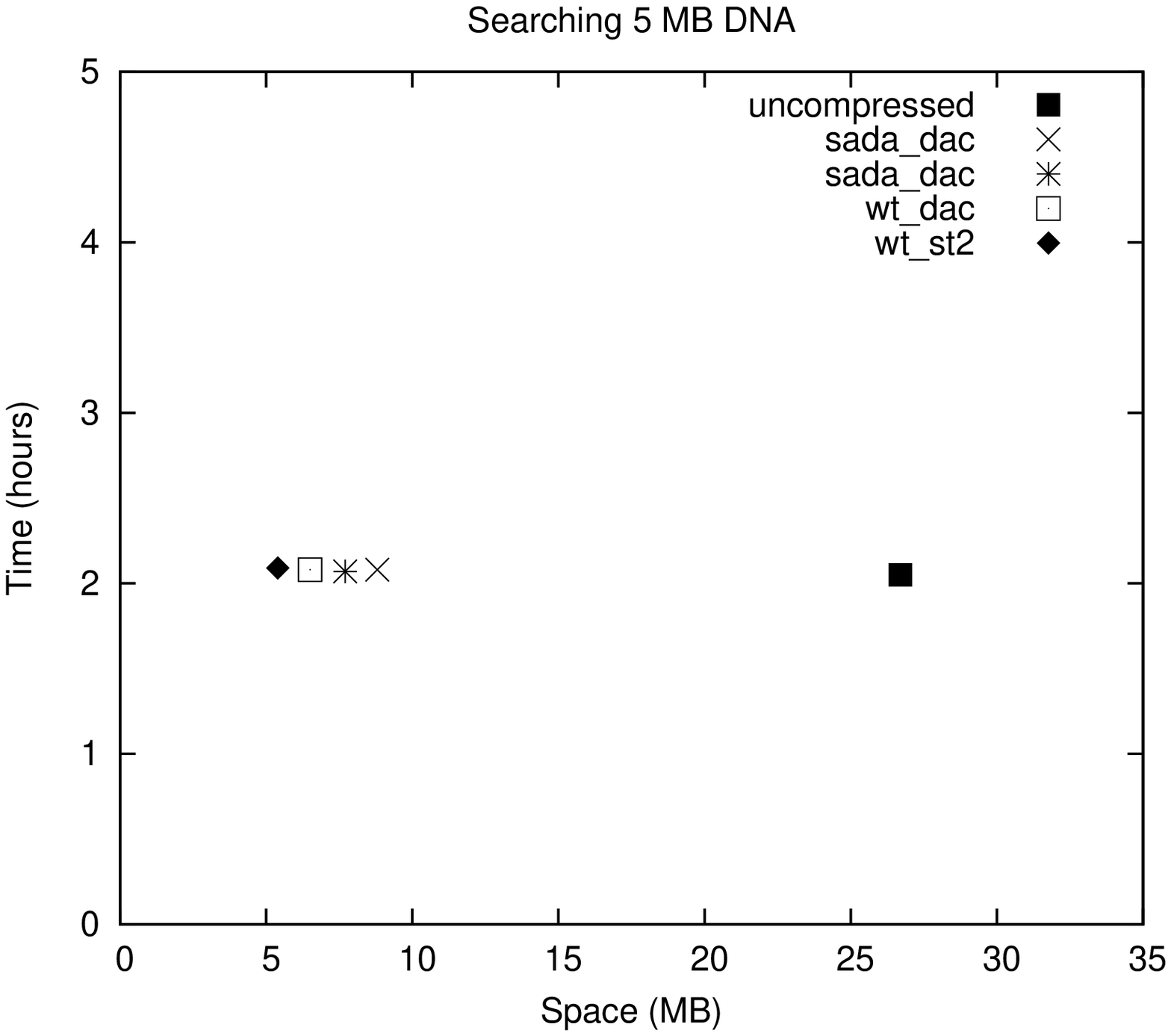}
		
		\includegraphics[scale=.4]{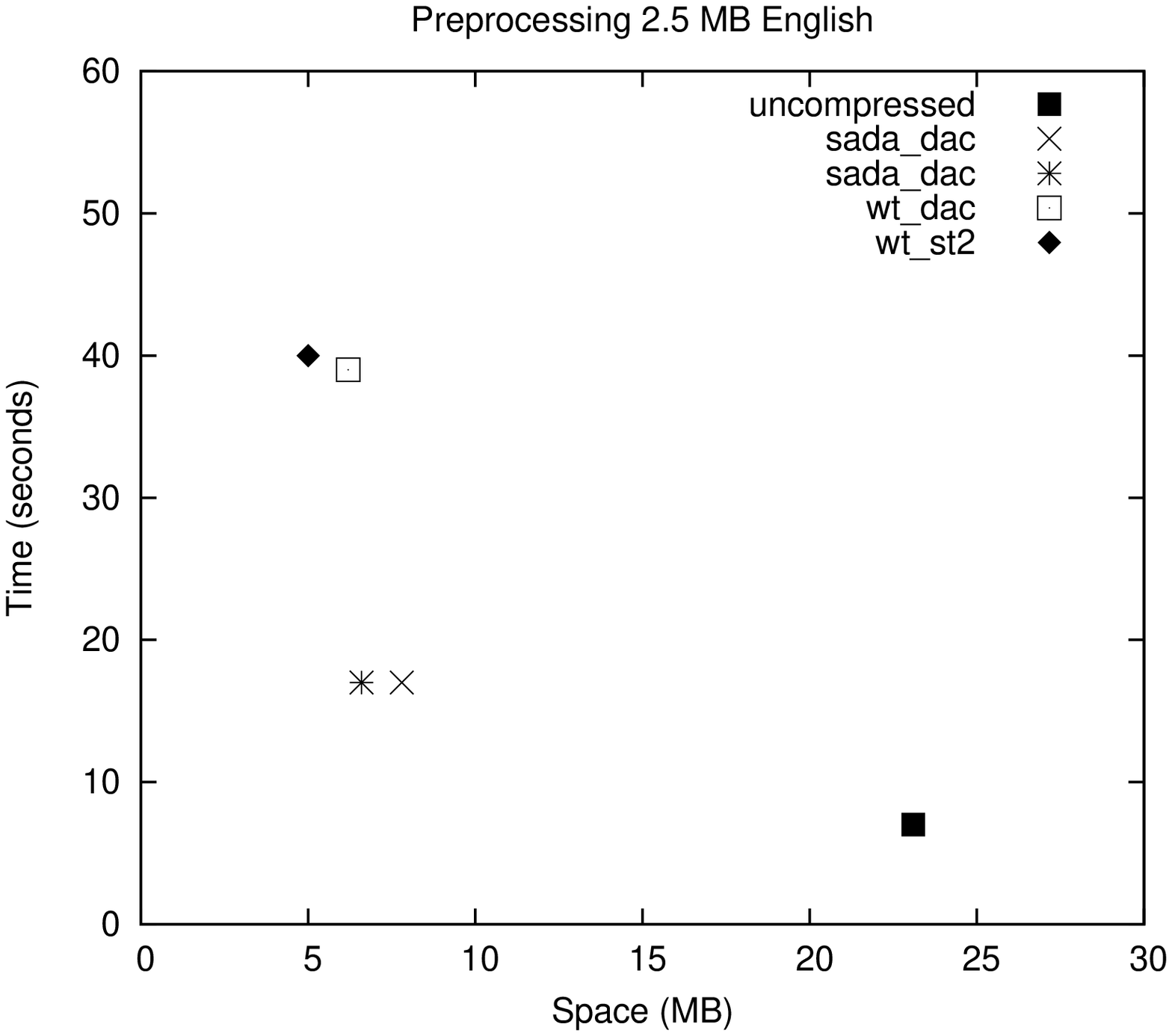}
		\hspace{10pt}
		\includegraphics[scale=.4]{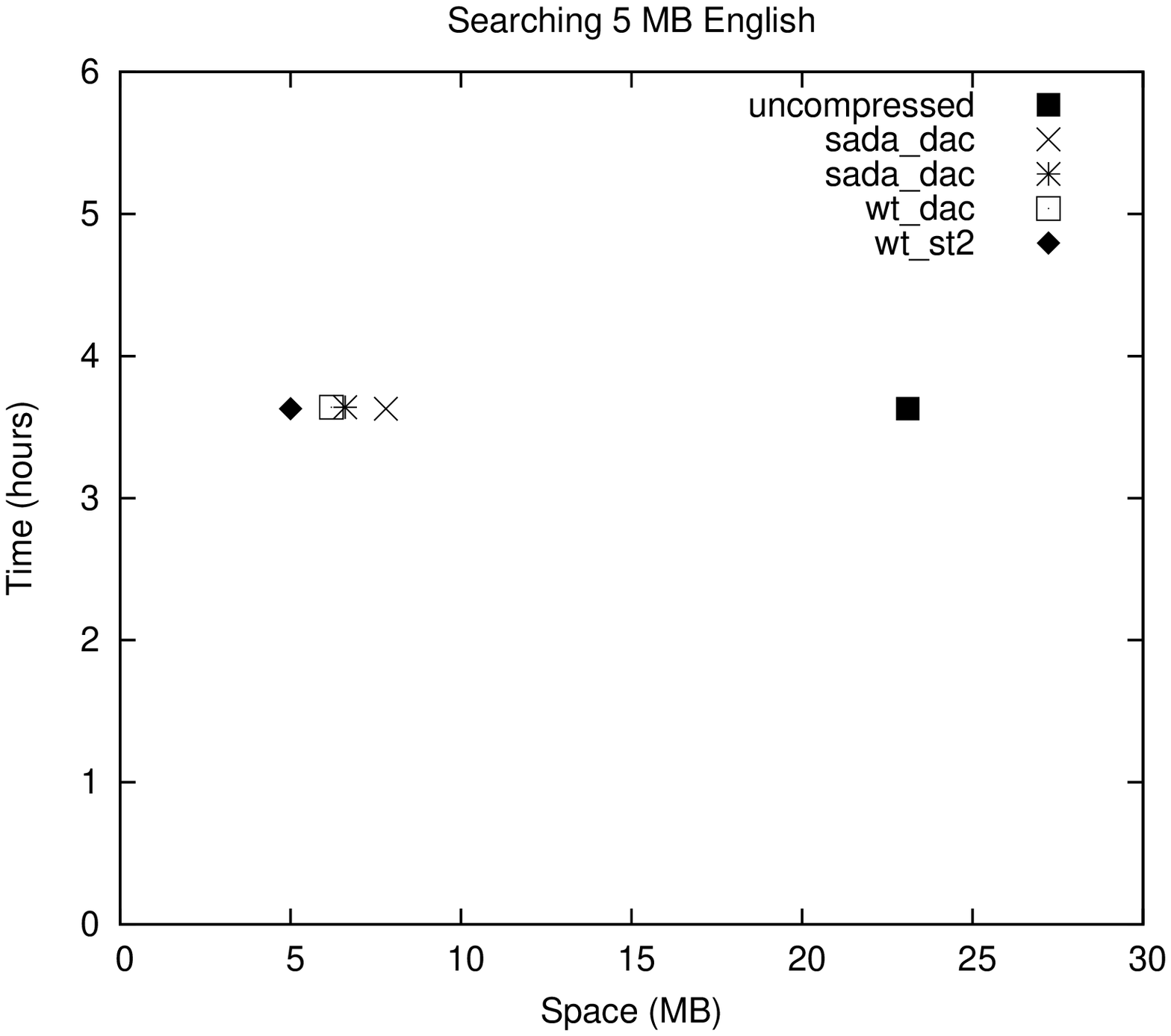}

		%\caption{Time-space trade-offs in dictionary matching experiments.}
		\caption{Time-space trade-offs of different representations of the compressed suffix array and the compressed LCP array in Sadakane's compressed suffix tree for dictionary matching.  The uncompressed is the outlier in space consumption.  The four compressed versions have similar time and space complexities.}
		\label{fig:expGraphs}
	\end{center}
	\end{figure}
\begin{table}
	\centering
		\begin{tabular}{|l|l|l|l|}
			\hline
			\multicolumn{4}{|c|}{\textbf{Experiments on DNA: 3 MB dictionary and 5 MB text}} \\
			\hline
			\textbf{CST components} & \textbf{Space} & \textbf{Preprocessing Time} & \textbf{Searching Time} \\ % & \textbf{Searching Time} \\
			\hline
uncompressed & 26.7 MB & 		8 sec  & 2.05 hours \\ %& 7,409 sec
\hline
sada\_dac & 8.8 MB &         19 sec    &  2.08 hours \\ %&   7,493 sec  
\hline
sada\_st2 &  7.7 MB   &       21 sec  &   2.07 hours\\ % 7,457 sec& 
\hline
wt\_dac & 6.5 MB     &     29 sec  &     2.08 hours \\ %7,492 sec  &
\hline
wt\_st2 & 5.4 MB   &       30 sec   &   2.09 hours \\ %7,529 sec  &
			\hline
		\end{tabular}
		\caption{Time-space trade-offs of using different representations of the compressed suffix array and the compressed LCP array in Sadakane's compressed suffix tree for searching 5 MB of the human genome for promoter sequences that comprise a 3 MB dictionary.  13 pattern occurrences were found in the text.
		} 
		\label{table:expResultsDNA}
%\end{table}

%\begin{table}
%	\centering
		\begin{tabular}{|l|l|l|l|}
			\hline
			\multicolumn{4}{|c|}{\textbf{Experiments on English text: 2.5 MB dictionary and 5 MB text}} \\
			\hline
			\textbf{CST components} & \textbf{Space} & \textbf{Preprocessing Time} & \textbf{Searching Time} \\  %&  \textbf{Searching Time} \\
			\hline
uncompressed & 23.1 MB & 7 sec  & 3.63 hours \\ % & 13,069 sec
\hline
sada\_dac & 7.8 MB   &       17 sec    & 3.63 hours \\ %&   13,071 sec
\hline
sada\_st2 & 6.6 MB &         17 sec &     3.64 hours\\ %13,107 sec  &
\hline
wt\_dac & 6.2 MB     &     39  sec  &   3.64 hours \\ %13,086 sec & 
\hline
wt\_st2 & 5.0 MB     &     40  sec &     3.63 hours\\ %13,074 sec &
			\hline
		\end{tabular}
		\caption{Time-space trade-offs of using different representations of the compressed suffix array and the compressed LCP array in Sadakane's compressed suffix tree for searching 5 MB of English text for common English words in a 2.5 MB dictionary.  12,717 pattern occurrences were located in the text.}		 
		\label{table:expResultsEng}
\end{table}

\begin{comment}
%from http://dichato.dcc.uchile.cl/~rcanovas/publicaciones/sea10.pdf
We implemented the data structures in C++, and ran some tests on a Pentium
4 machine. The Pentium 4 has 512MB
RAM, a 2.4GHz CPU and a 512KB cache, running Debian Linux. The compiler
was g++ 2.95 with optimisation level 2.

All our experiments were performed on 100 MB of the protein, sources, XML and DNA texts
from Pizza&Chili corpus (\url{http://pizzachili.dcc.uchile.cl}). The computer used features an Intel(R)
Core(TM)2 Duo processor at 3:16 GHz, with 8 GB of main memory and 6 MB of cache, running
version 2:6:24-24 Linux kernel.
\end{comment}

\section{Conclusion}\label{sec:conclusion}

We have introduced dictionary matching software that runs in small space.  Its underlying data structure is the compressed suffix tree.  This program runs in linear time, disregarding the slowdown of querying the compressed self-index.  We have shown that our implementation conserves considerable space in practice.  Our software includes a space-efficient technique for performing lowest marked ancestor queries on compressed suffix trees, a contribution that is useful for many other applications.

We would like to extend our small-space dictionary matching software to accommodate a dynamically changing set of patterns in the dictionary.  Several dynamic compressed suffix tree representations have been presented \cite{ChanHLS07, RusNavOli08} but they lack implementations.  Extending this work to the dynamic setting would begin by implementing the dynamic compressed suffix tree to accommodate insertion, deletion, and modification of dictionary patterns, without rebuilding the index of the entire dictionary.

\subsection*{Acknowledgements}
The authors would like to thank Simon Gog for his help installing and working with the Succinct Data Structures Library.

%\newpage
\small
%\bibliography{patmatchSN}
%\bibliographystyle{plain}
%changed for arXiv

\end{document}